\documentclass[journal]{IEEEtran}
\usepackage{amsmath,amssymb,amsthm}
\usepackage{graphicx}
\usepackage{booktabs}
\usepackage{multirow}
\usepackage{cite}
\usepackage{algorithm}
\usepackage{algpseudocode}
\usepackage{hyperref}

\newtheorem{proposition}{Proposition}
\newtheorem{lemma}{Lemma}
\newtheorem{corollary}{Corollary}

\newcommand{\E}{\mathbb{E}}
\newcommand{\Rea}{\operatorname{Re}}
\newcommand{\tr}{\operatorname{tr}}

\begin{document}

\title{Wait or Rotate: Statistical Repair of the Backward Pass in Over-the-Air Split Learning Under Channel Aging}

\author{Xianzhong Li, Yuzhi Yang, Tianhao Guo}

\maketitle

\begin{abstract}
Over-the-air split learning returns the receiver's loss gradient through the reciprocal channel, so terminal motion between the forward and backward transmissions corrupts the backward pass. We repair the aged return at a fixed receive array using only second-order statistics. Two numbers taken from the forward--backward cross-covariance, the alignment reachable by the best common phase and by the best unitary correction, bound the main term of every repair. In the 3GPP tapped-delay-line channels TDL-A/B/C, where every antenna ages by the same scalar, the two numbers coincide, and at a correlation zero all tested fixed corrections stay between 58\% and 62\% accuracy, whereas a return delayed to the next correlation lobe reaches 98.4\%. In 3GPP clustered-delay-line channels a statistical unitary correction stays within 0.02 in mean cross-entropy of ideal channel knowledge in all six tested profile and speed combinations, while common-phase correction falls to 69\% for CDL-A at 3~km/h and the tested delay loses up to 10 points. In a two-cluster model spatial correction is decisive only where the common-phase alignment nearly vanishes (64\% to 95\%). With a ResNet-18 on CIFAR-10, only the delayed return trains at a separable correlation zero (86\% against at most 16\%), and at 30~km/h in a clustered-delay-line channel the unitary correction reaches 80\% where the common phase reaches 60\%. Given geometry, displacement and angular statistics, a port permutation chosen by linear assignment reaches 99.5--99.7\% accuracy for hexagonal and square arrays in oblique motion without pilots or digital feedback during training, where nearest-position matching drops to 89\%.
\end{abstract}

\begin{IEEEkeywords}
Over-the-air split learning, channel aging, channel reciprocity, spatial correlation, 3GPP channel models, port selection, polar decomposition.
\end{IEEEkeywords}

\section{Introduction}

Over-the-air split learning places a layer of a neural network in the wireless channel and uses reciprocity to compute that layer's backward pass in the air. The receiver sends the derivative of its loss back to the transmitter, and the reverse channel applies the required adjoint without a channel estimate~\cite{yang2022jsac,yang2022mlsp,wang2021relay}. The same principle trains photonic and other physical networks in situ~\cite{hermans2015,hughes2018,pai2023,bosch2026unifying}. It relies on the backward channel equalling the forward one. When the terminal moves between the two transmissions, the channel ages~\cite{truong2013aging} and the returned gradient is attenuated and rotated. Once the rotation exceeds $90^\circ$ the mean update climbs the loss, which is the classical failure condition of filtered-reference adaptive control~\cite{morgan1980,boucher1991}. At 28~GHz half a wavelength is about 5~mm, which a pedestrian at 1~m/s covers in 5~ms.

A receiver that knows only channel statistics can repair the aged return in three ways. A common phase rotation of the whole return is exact for a single antenna away from correlation zeros~\cite{wcl}, a spatial correction mixes the returns of different antennas, and a delayed return waits until the correlation has recovered, which is the remedy for a single antenna at a correlation zero~\cite{wcl}. Restoring the antenna position with a movable or reconfigurable antenna~\cite{jsac,artner2020,li2025static} is a fourth option. This paper treats a fixed array, which can only change how it processes the return, and gives a heuristic, based on second-order statistics and applied before training, for choosing among the three repairs.

The choice depends on whether the channel ages separately in space and time, and the standard 3GPP models contain both cases~\cite{3gpp38901}. The non-line-of-sight tapped-delay-line models TDL-A/B/C give every tap a common Jakes temporal process and define no spatial structure. Whatever spatial correlation a simulation attaches to them, the forward--backward cross-covariance is the zero-lag covariance times a scalar, and the array ages exactly like a single antenna. Under these models spatial mixing cannot raise the main-term ceiling above the best common phase, and at a correlation zero neither supplies a conditional mean gradient. The clustered-delay-line (CDL) models attach a Doppler shift to every ray, so the Doppler phase depends on the arrival direction and the aging is not separable. We show that the two families call for opposite repairs, and that two numbers computed from the cross-covariance bound the main term that spatial correction can add and help select a repair, without determining the training outcome on their own.

\emph{Related work.} For channel prediction, Wong and Evans showed that under separable space--time correlation and noiseless observations the joint space--time predictor reduces to per-element temporal prediction~\cite{wong2006}. Corollary~\ref{cor:sep} below is the counterpart of this fact for a fixed correction of the returned gradient. Predictor antennas~\cite{sternad2012,guo2021} and displaced-phase-center processing~\cite{dpca2000} use a trailing element that revisits the position of a leading one. Movable- and fluid-antenna mobility can keep a channel static~\cite{artner2020,li2025static}, fluid-antenna port selection uses temporal and spatial correlation to reduce observations~\cite{zhang2023fas}, and the Hungarian method has planned movable-antenna reconfiguration from environment priors~\cite{ji2026}. These works aim at the channel itself, whereas a returned gradient only needs a positive component along the true gradient. This is why second-order statistics, without channel reconstruction or feedback, can suffice. We rely on the polar factor as the nearest unitary matrix~\cite{higham1986}, on von Neumann's trace inequality~\cite{horn2013}, and on the Hungarian method for the assignment problem~\cite{kuhn1955}. Positive alignment at each step is not enough for convergence. When the aligned component stays above a fixed fraction of the squared gradient, the second moment of the update is controlled, the loss is smooth and bounded below and the step sizes are chosen accordingly, the analysis in~\cite{bottou2018} shows that a weighted average of the squared gradient norms of the resulting biased SGD tends to zero. Companion manuscripts treat a single receive antenna with waiting and few-bit phase correction~\cite{wcl}, movable-antenna position restoration~\cite{jsac}, and per-realization sign laws for port re-indexing on lattices~\cite{lattice}. A further companion report~\cite{lessons} compares unitary correction with truncated inversion near a correlation zero and finds them comparable.

\emph{Contributions.}
\begin{itemize}
\item \emph{Statistical ceilings and a selection rule.} The mean return splits into a main term and a task-dependent residual. The main term reachable by a common phase is the magnitude of the trace of the cross-covariance, and the main term reachable by any correction of unit spectral norm is its nuclear norm. A rule built on the two numbers chooses among phase correction, unitary correction and delayed return before training, and it picked the best repair in 7 of the 8 tested cells where the repairs differ (Sections~\ref{sec:repair} and~\ref{sec:select}).
\item \emph{Separable versus non-separable 3GPP aging.} In TDL-A/B/C channels the two ceilings coincide, and at the tested correlation zero only delayed returns train. In CDL channels a unitary correction computed from statistics comes within 0.02 mean CE of ideal channel knowledge without delay in every tested profile, while the common phase and the tested delayed return each fall behind in some profiles (Section~\ref{sec:3gpp}).
\item \emph{When spatial correction matters.} Over 23 channel configurations of a compact task the gap between the two ceilings ranks the training gain of spatial over phase correction (Spearman $0.73$). A large gain appears where the common-phase main term nearly vanishes, and moderate values of it already suffice for training (Section~\ref{sec:select}). In CIFAR-10 split learning with a ResNet-18 encoder, the separable zero again admits only the delayed return, the unitary correction is the best statistical repair in both clustered channels, and the selection rule picks the best candidate in all three cells (Section~\ref{sec:cifar}).
\item \emph{Zero-feedback port permutation.} Restricted to permutations, the main term becomes a linear assignment problem. Solved on the cross-covariance with a geometric and angular prior, it needs no pilots or digital feedback during training and is realized by a switching network without complex weights. It matches pilot-estimated unitary correction for hexagonal and square arrays in oblique motion and ranks all permutations by trained accuracy, which nearest-position matching does not. With the same prior and normalization it never beats the statistical unitary correction by more than 0.0015 mean CE, and its score identifies configurations, such as a linear array in broadside motion, where pure permutations fall short (Section~\ref{sec:perm}).
\end{itemize}

\section{System Model}\label{sec:model}

\emph{Forward and backward passes.} A transmitter with $N_t$ ports maps an input $x$ to $s=f_W(x)\in\mathbb{C}^{N_t}$. The receiver with $N_r$ ports observes $y=H_f s+n$ and evaluates a loss $L$. Its port-domain seed is the complex gradient $\delta=2\,\partial L/\partial y^*$. The receiver transmits $M\delta$, where $M\in\mathbb{C}^{N_r\times N_r}$ is a fixed linear correction, after power normalization. Through the reciprocal reverse link with phase conjugation the transmitter receives
\begin{equation}\label{eq:return}
h=H_r^H M\delta+w,
\end{equation}
where $H_r$ is the channel at the time of the return and $w$ is reverse noise with $\E[w\mid\mathcal F]=0$. Equation~\eqref{eq:return} idealizes the norm restoration and ignores its quantization. The transmitter's local backward map $\mathcal{J}$, its Jacobian adjoint, gives the update direction $\hat g=\mathcal{J}(h)$. With $H_r=H_f$, $M=I$ and $w=0$ this is the true gradient $g=\mathcal{J}(H_f^H\delta)$.

\emph{Aging statistics.} Between the forward and backward transmissions the receiver moves by $d$ wavelengths. Let the transmit columns of $H_f$ and $H_r$ be zero-mean, proper, and jointly Gaussian with common receive covariances
\begin{equation}
C_{00}=\tfrac{1}{N_t}\E[H_fH_f^H],\qquad C_{10}=\tfrac{1}{N_t}\E[H_rH_f^H],
\end{equation}
independent transmit columns, and $C_{00}\succ0$, and let the innovation $H_r-\E[H_r\mid H_f]$ have zero conditional mean given the entire forward information (input, parameters, $H_f$ and forward noise). Then $\E[H_r\mid H_f]=AH_f$ with $A=C_{10}C_{00}^{-1}$. The angular-field and TDL channels of Section~\ref{sec:setup} are generated this way. The CDL generator sums finitely many rays whose coefficients are shared by the transmit columns, so it does not satisfy this left-multiplicative conditional mean, and for CDL the quantities below are statistical registration measures rather than an exact decomposition of the returned gradient. The scalar trace correlation $\rho=\tr C_{10}/\tr C_{00}$ is the power-weighted average of the per-port temporal correlations, and it equals the single-antenna coefficient when all ports have the same statistics.

\emph{Separable and non-separable aging.} Aging is \emph{separable} at displacement $d$ if $C_{10}=\rho C_{00}$, equivalently $A=\rho I$. In the non-line-of-sight 3GPP models TDL-A/B/C every tap follows a common Jakes process~\cite{3gpp38901}. With any spatial covariance attached to the taps, the aging is separable with $\rho=J_0(2\pi d)$. TDL-D/E add a Ricean first tap with its own Doppler line and are not covered by this statement. In the 3GPP CDL models the channel is a sum of rays with individual angles and Doppler shifts $\nu_r=f_D\sin\theta^{\rm ZOA}_r\cos\phi^{\rm AOA}_r$ for motion along the $x$-axis~\cite{3gpp38901}. The Doppler phase of a ray depends on its direction, so $C_{10}=\sum_r p_re^{j2\pi d\nu_r/f_D}a(r)a(r)^H$ is not a scalar multiple of $C_{00}=\sum_rp_ra(r)a(r)^H$ in general. We quantify non-separability by $\|C_{10}-\rho C_{00}\|_F/\|C_{10}\|_F$.

\section{Repairs and Their Statistical Ceilings}\label{sec:repair}

\subsection{Main Term of the Returned Gradient}
Under the model above, $\E[h\mid\mathcal F]=H_f^HA^HM\delta$, where $\mathcal F$ collects the forward-pass quantities (input, parameters, $H_f$, forward noise and $\delta$). The effective return operator is $B=A^HM$. Let $\bar g$ be the batch-averaged true gradient and $\bar{\hat g}$ the batch average of the conditional-mean return, and assume complex-linear $\mathcal{J}$.

\begin{lemma}[Task-port moment]\label{lem:T}
Let $T=\E_x[H_f\mathcal{J}^H(\bar g)\delta^H]$. Then $\langle\bar g,\bar{\hat g}\rangle=\tr(T^HB)$ and $\tr T=\|\bar g\|^2$.
\end{lemma}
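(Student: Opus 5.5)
The argument rests on two facts. By linearity, batch averages pass through $\mathcal{J}$. And the inner product $\langle u,v\rangle$ is $\Rea(u^Hv)$ or $u^Hv$; we use the complex form $u^Hv$, which suffices for both readings.

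\textbf{Move the averages inside $\mathcal{J}$.} Since $\mathcal{J}$ is complex-linear and does not depend on the sample beyond what is fixed in the batch, we have $\bar{\hat g}=\E_x[\mathcal{J}(H_f^HB\delta)]$. This uses $\E[h\mid\mathcal F]=H_f^HA^HM\delta=H_f^HB\delta$, established just before the lemma. Likewise $\bar g=\E_x[\mathcal{J}(H_f^H\delta)]$.

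\textbf{Transfer the Jacobian to the other side.} Let $\mathcal{J}^H$ denote the adjoint of $\mathcal{J}$, so that $\langle u,\mathcal{J}(v)\rangle=\langle\mathcal{J}^H(u),v\rangle$. Then
\[
\langle\bar g,\bar{\hat g}\rangle=\E_x\big[\mathcal{J}^H(\bar g)^H H_f^H B\delta\big]
=\E_x\big[\tr\big(B\,\delta\,\mathcal{J}^H(\bar g)^H H_f^H\big)\big].
\]
Pulling $B$ out of the expectation, which is allowed because $B$ is fixed, gives
\[
\langle\bar g,\bar{\hat g}\rangle=\tr\big(B\,\E_x[\delta\mathcal{J}^H(\bar g)^HH_f^H]\big)=\tr(BT^H)=\tr(T^HB).
\]
Here $T^H=\E_x[\delta\,\mathcal{J}^H(\bar g)^H H_f^H]$ by the definition of $T$, and the last equality is cyclicity of the trace.

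\textbf{The trace of $T$.} Setting $B=I$ is the case $A=I$, $M=I$, in which the conditional-mean return is exactly $\bar g$. The same identity then gives $\tr T=\langle\bar g,\bar g\rangle=\|\bar g\|^2$. Directly, $\tr T=\E_x[\delta^HH_f\mathcal{J}^H(\bar g)]=\E_x\langle H_f^H\delta,\mathcal{J}^H(\bar g)\rangle=\langle \E_x\mathcal{J}(H_f^H\delta),\bar g\rangle=\|\bar g\|^2$, using conjugate symmetry and the fact that the result is real.

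\textbf{Main obstacle: bookkeeping of conjugations.} The definition $T=\E[H_f\mathcal{J}^H(\bar g)\delta^H]$ puts $H_f$ on the left. The trace pairing must therefore yield $\tr(T^HB)$ and not $\tr(TB)$ or its complex conjugate. If the paper's inner product is the real one, $\Rea$, then taking real parts of both sides finishes the argument, and $\tr T$ is real in either case. A secondary point is that the batch average must commute with $\mathcal{J}$. This holds provided $\mathcal{J}$ is evaluated at fixed parameters across the batch, which we assume.
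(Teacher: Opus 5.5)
Your proof is correct and follows the paper's own route: move $\mathcal{J}$ to the other side of the inner product so that $u=H_f\mathcal{J}^H(\bar g)$ appears, write $u^HB\delta=\tr(B\delta u^H)$, average over the batch to get $\tr(BT^H)=\tr(T^HB)$, and set $B=I$ for the trace. One small correction: drop your closing assumption that the batch average commutes with $\mathcal{J}$. The Jacobian adjoint generally depends on the input $x$, and your computation never uses that commutation; it uses only linearity of the inner product in its second argument, with $\mathcal{J}$ kept per sample inside $\E_x$, exactly as in the definition of $T$.
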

\begin{proof}
With $u=H_f\mathcal{J}^H(\bar g)$, $\langle\bar g,\mathcal{J}(H_f^HB\delta)\rangle=u^HB\delta=\tr(B\delta u^H)$. Averaging over the batch gives the first claim, and $B=I$ gives the trace.
\end{proof}

\begin{proposition}[Main term and residual]\label{prop:main}
Let $c=\|\bar g\|^2/\tr C_{00}$, $T_r=T-cC_{00}$ and $B_0=B-\tfrac{\tr B}{N_r}I$. Then
\begin{gather}
\Rea\langle\bar g,\bar{\hat g}\rangle=\|\bar g\|^2\,m(M)+\Rea\tr(T_r^HB_0),\label{eq:main}\\
m(M)=\frac{\Rea\tr(C_{10}^HM)}{\tr C_{00}},\nonumber
\end{gather}
and $|\Rea\tr(T_r^HB_0)|\le\|T_r\|_F\|B_0\|_F$.
\end{proposition}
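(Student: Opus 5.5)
The plan is to start from Lemma~\ref{lem:T}, which gives $\langle\bar g,\bar{\hat g}\rangle=\tr(T^HB)$ with $B=A^HM$. I would then split both factors: $T=cC_{00}+T_r$ and $B=\tfrac{\tr B}{N_r}I+B_0$. The identity follows from bilinearity of $(X,Y)\mapsto\tr(X^HY)$, provided the cross terms either reduce to the main term or vanish. The constant $c$ is chosen so that $T_r$ is traceless, and $B_0$ is traceless by construction.

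First I would expand $\tr(T^HB)=c\,\tr(C_{00}^HB)+\tr(T_r^HB)$ and evaluate the first piece. Since $C_{00}$ is Hermitian and invertible, $A^H=(C_{10}C_{00}^{-1})^H=C_{00}^{-1}C_{10}^H$. Hence $\tr(C_{00}B)=\tr(C_{00}C_{00}^{-1}C_{10}^HM)=\tr(C_{10}^HM)$. This gives
\[
c\,\Rea\tr(C_{10}^HM)=\|\bar g\|^2\,\frac{\Rea\tr(C_{10}^HM)}{\tr C_{00}}=\|\bar g\|^2 m(M).
\]
This is the step where the model assumption $\E[H_r\mid H_f]=AH_f$ enters, through the specific form of $A$. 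The only subtlety is keeping the Hermitian transpose of $C_{00}$ straight.

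Second, I would replace $B$ by $B_0$ in the residual, which is the point of the whole decomposition. We have $\tr(T_r^HB)=\tfrac{\tr B}{N_r}\overline{\tr T_r}+\tr(T_r^HB_0)$. By the second claim of Lemma~\ref{lem:T},
\[
\tr T_r=\tr T-c\,\tr C_{00}=\|\bar g\|^2-\|\bar g\|^2=0,
\]
so the identity component of $B$ drops out. Taking real parts then yields \eqref{eq:main}.

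Finally, the bound $|\Rea\tr(T_r^HB_0)|\le|\tr(T_r^HB_0)|\le\|T_r\|_F\|B_0\|_F$ is Cauchy--Schwarz for the Frobenius inner product. I expect no real obstacle here. The step needing the most care is the tracelessness of $T_r$, since it relies on the normalization $\tr T=\|\bar g\|^2$ from Lemma~\ref{lem:T}, which uses $B=I$ and the complex-linearity of $\mathcal J$. The main-term computation also depends on $C_{00}\succ0$ so that $A$ is well defined.
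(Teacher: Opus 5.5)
Your proof is correct and follows essentially the same route as the paper. The paper's one-line proof consists of your two observations: $\tr T_r=0$ removes the identity component of $B$ from the residual, and $\tr(C_{00}A^HM)=\tr(C_{10}^HM)$. The only difference is that you write out the bilinear expansion, the reality of $c$, the Hermitian symmetry of $C_{00}$ and the Frobenius Cauchy--Schwarz step, all of which the paper leaves implicit.
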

\begin{proof}
$\tr T_r=0$ removes the identity part of $B$ from the residual, and $\tr(C_{00}A^HM)=\tr(C_{10}^HM)$.
\end{proof}

The main term $m(M)$ depends only on second-order statistics and on the correction, not on the task. The residual depends on how far the task-port moment $T$ deviates from a multiple of $C_{00}$. The main term decides the sign of the mean descent component whenever $\|\bar g\|^2|m(M)|$ exceeds the residual bound $\|T_r\|_F\|B_0\|_F$. Near zeros, or in parameter states where the residual dominates, the sign can flip, and a guarantee for every loss needs a per-realization analysis such as the one in~\cite{lattice} for lattice port re-indexing.

\subsection{Three Ceilings}
Every repair below uses a correction of unit spectral norm, so it does not raise the norm bound on the returned noise. Define
\begin{equation}\label{eq:ceilings}
R=\frac{|\tr C_{10}|}{\tr C_{00}},\qquad P=\frac{\|C_{10}\|_*}{\tr C_{00}},\qquad G=P-R\ge0.
\end{equation}

\begin{proposition}[Ceilings of the main term]\label{prop:ceil}
\begin{enumerate}
\item Over common phases $M=e^{j\varphi}I$, $\max_\varphi m(M)=R$, attained at $\varphi=\arg\tr C_{10}$.
\item Over all $M$ with $\|M\|_2\le1$, $\max m(M)=P$, attained at the unitary $M_\star=UV^H$ where $C_{10}=U\Sigma V^H$ is a singular value decomposition.
\item Over permutation matrices $\Pi$ with $\Pi_{\pi(i),i}=1$, $m(\Pi)=\sum_i\Rea[C_{10}]_{\pi(i),i}/\tr C_{00}$, and its maximum $S_\star$ is a linear assignment problem with $S_\star\le P$.
\end{enumerate}
\end{proposition}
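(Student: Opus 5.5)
All three parts follow from the expression $m(M)=\Rea\tr(C_{10}^HM)/\tr C_{00}$ given in Proposition~\ref{prop:main}. Since $\tr C_{00}>0$ is fixed, it suffices to maximize $\Rea\tr(C_{10}^HM)$ over each admissible set, and I treat the three sets in turn.

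For part 1, I substitute $M=e^{j\varphi}I$, which gives $\Rea\tr(C_{10}^HM)=\Rea\bigl(e^{j\varphi}\overline{\tr C_{10}}\bigr)$. Writing $\tr C_{10}=|\tr C_{10}|e^{j\theta}$, this equals $|\tr C_{10}|\cos(\varphi-\theta)$. It is maximized at $\varphi=\theta=\arg\tr C_{10}$ with value $|\tr C_{10}|$, hence $R$. If $\tr C_{10}=0$, every phase attains $0=R$.

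Part 2 carries the real content. For the upper bound I use von Neumann's trace inequality~\cite{horn2013}. With $\sigma_i(\cdot)$ the singular values in decreasing order, it gives
\[
\Rea\tr(C_{10}^HM)\le|\tr(C_{10}^HM)|\le\sum_i\sigma_i(C_{10})\sigma_i(M)\le\|M\|_2\|C_{10}\|_*\le\|C_{10}\|_*.
\]
If I wanted to avoid citing von Neumann, I would instead take the SVD $C_{10}=U\Sigma V^H$ and write $\tr(C_{10}^HM)=\sum_i\sigma_i\,v_i^HU^HM\ldots$, more precisely $\sum_i\sigma_i\,u_i^HMv_i$. Each factor satisfies $|u_i^HMv_i|\le\|M\|_2\le1$, which gives the same bound directly. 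For attainment, I take $M_\star=UV^H$, which is unitary and therefore has $\|M_\star\|_2=1$. Then $\tr(C_{10}^HM_\star)=\tr(V\Sigma U^HUV^H)=\tr\Sigma=\|C_{10}\|_*$, which is real, so it meets the bound and the maximum is $P$.

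For part 3, I write $\Pi=\sum_i e_{\pi(i)}e_i^T$ and compute
\[
\tr(C_{10}^H\Pi)=\sum_i e_i^TC_{10}^He_{\pi(i)}=\sum_i\overline{[C_{10}]_{\pi(i),i}}.
\]
Taking the real part removes the conjugation and gives the stated formula. Maximizing this sum over permutations $\pi$ with weights $\Rea[C_{10}]_{k,i}$ is exactly a linear assignment problem, solvable by the Hungarian method~\cite{kuhn1955}. Every permutation matrix is unitary and so satisfies $\|\Pi\|_2=1$. The permutations therefore form a subset of the feasible set in part 2, which gives $S_\star\le P$. The same inclusion argument applies to the common phases of part 1, which yields $R\le P$ and hence $G\ge0$ in \eqref{eq:ceilings}.

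The only nontrivial step is the nuclear-norm bound in part 2. I would present it through the SVD form $\sum_i\sigma_i u_i^HMv_i$ above, which is self-contained. Everything else is bookkeeping about conjugation and index order in the permutation trace, and I would verify that index order carefully against the convention $\Pi_{\pi(i),i}=1$.
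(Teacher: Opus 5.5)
Your proof is correct and follows the paper's argument. You evaluate the common phase directly, use von Neumann's trace inequality with equality at $UV^H$ for the norm-bounded case, and combine the permutation trace identity with $\|\Pi\|_2=1$ for the assignment bound; your SVD bound $|\tr(C_{10}^HM)|\le\sum_i\sigma_i|u_i^HMv_i|$ is a self-contained proof of the same inequality, and your inclusion remark giving $R\le P$ is also correct.
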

\begin{proof}
(1) is immediate. (2) is von Neumann's trace inequality, $\Rea\tr(XM)\le\sum_i\sigma_i(X)\sigma_i(M)\le\|X\|_*$ for $\|M\|_2\le1$~\cite{horn2013}, with $X=C_{10}^H=V\Sigma U^H$ and equality at $M=UV^H$, which is the unitary Procrustes problem~\cite{higham1986}. (3) follows from $\tr(C_{10}^H\Pi)=\sum_i\overline{[C_{10}]_{\pi(i),i}}$ and $\|\Pi\|_2=1$.
\end{proof}

$G$ is the gain of the best spatial correction over the best common phase. Equivalently, since $\E\|H_r-MH_f\|_F^2/N_t=\tr C_{11}+\tr C_{00}-2\Rea\tr(C_{10}^HM)$ for unitary $M$, with $C_{11}$ the covariance at the return, $2G\tr C_{00}$ is the reduction in mean-square registration error from the best common phase to the best unitary correction. The ceilings are thus channel-registration quantities, and here they serve to select a repair for the gradient. $G$ has a simple characterization.

\begin{corollary}[When spatial correction cannot raise the main term]\label{cor:sep}
$G=0$ if and only if $C_{10}=e^{j\varphi}S$ for some $\varphi$ and $S\succeq0$. In particular, separable aging gives $G=0$, and at a zero of $\rho$ the main term of every correction with $\|M\|_2\le1$ vanishes.
\end{corollary}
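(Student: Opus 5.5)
The plan is to reduce $G=0$ to the equality case of the inequality $|\tr X|\le\|X\|_*$, applied to $X=C_{10}$. Since $\tr C_{00}>0$, $G=0$ holds exactly when $|\tr C_{10}|=\|C_{10}\|_*$.

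First I will establish the inequality together with its equality case. Take a singular value decomposition $C_{10}=U\Sigma V^H$. Then
\[
\tr C_{10}=\tr(\Sigma V^HU)=\sum_i\sigma_i\,[V^HU]_{ii}.
\]
The matrix $W=V^HU$ is unitary, so $|W_{ii}|\le1$, and hence $|\tr C_{10}|\le\sum_i\sigma_i=\|C_{10}\|_*$. This reproves Proposition~\ref{prop:ceil}(2) evaluated at $M=e^{j\varphi}I$.

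For the "if" direction, suppose $C_{10}=e^{j\varphi}S$ with $S\succeq0$. The eigenvalues of $S$ are its singular values, so $\|C_{10}\|_*=\tr S$, and $|\tr C_{10}|=\tr S$ as well.

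For the "only if" direction, which I expect to be the main step, equality forces $\sum_i\sigma_i W_{ii}=e^{j\varphi}\sum_i\sigma_i$ for some $\varphi$. Because $|W_{ii}|\le1$, this gives $W_{ii}=e^{j\varphi}$ for every $i$ with $\sigma_i>0$. A unitary matrix whose $i$-th diagonal entry has unit modulus has all other entries in row $i$ and column $i$ equal to zero. Therefore $W$ acts as $e^{j\varphi}$ on the span $E$ of the basis vectors with $\sigma_i>0$.

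Now I translate this back. Let $u_i$ and $v_i$ be the columns of $U$ and $V$. Since $U=VW$, we get $u_i=e^{j\varphi}v_i$ for every $i$ with $\sigma_i>0$. It follows that
\[
C_{10}=\sum_{\sigma_i>0}\sigma_iu_iv_i^H=e^{j\varphi}\sum_{\sigma_i>0}\sigma_iv_iv_i^H=e^{j\varphi}S,
\]
with $S\succeq0$. The one delicate point is the rank-deficient case. There the singular vectors for zero singular values are arbitrary, and $W$ need not act as $e^{j\varphi}$ off $E$. This is harmless because those terms carry weight $\sigma_i=0$ and do not appear in the sum above.

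The consequences then follow directly. Under separable aging, $C_{10}=\rho C_{00}$. Writing $\rho=|\rho|e^{j\arg\rho}$ gives $C_{10}=e^{j\arg\rho}\,(|\rho|C_{00})$ with $|\rho|C_{00}\succeq0$, so $G=0$.

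At a zero of $\rho$ under separable aging, $C_{10}=0$. Then $P=0$, and by Proposition~\ref{prop:ceil}(2), $m(M)\le P=0$ for every $M$ with $\|M\|_2\le1$. The same bound applied to $-M$, which also has $\|-M\|_2\le1$, gives $m(M)\ge0$. Hence $m(M)=0$.

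The statement says "at a zero of $\rho$" in the context of separable aging. I will note explicitly that this is the reading used. Without separability, $\rho=0$ gives $R=0$ but does not force $P=0$, which is exactly the CDL situation discussed later in the paper.
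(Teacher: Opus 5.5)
Your proof is correct and follows the paper's route. The paper reduces $G=0$ to the equality case of $|\tr X|\le\|X\|_*$ and simply asserts that equality holds exactly for a common phase times a positive semidefinite matrix; your SVD argument with the unitary $W=V^HU$ supplies that missing step. Your treatment of the consequences also matches: the paper notes directly that $\rho=0$ gives $C_{10}=0$, so your detour through $P=0$ and $\pm M$ is unnecessary but harmless, and your reading of the zero-$\rho$ clause as conditional on separability is the one the paper uses.
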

\begin{proof}
$|\tr X|\le\|X\|_*$ holds with equality exactly when $X$ is a common phase times a positive semidefinite matrix. If $C_{10}=\rho C_{00}$, then $\|C_{10}\|_*=|\rho|\tr C_{00}=|\tr C_{10}|$, and $\rho=0$ gives $C_{10}=0$.
\end{proof}

The converse of the separable statement fails: non-separable aging can still have $G=0$, e.g.\ $C_{00}=I$ and $C_{10}=\mathrm{diag}(0.2,0.4)$. Corollary~\ref{cor:sep} settles the design question for separable channels. Under separable aging a fixed array is a single antenna as far as the main term is concerned, and the only statistical remedy at a correlation zero is to change $d$.

\subsection{The Repairs}
\emph{Common phase.} $M=e^{j\arg\tr C_{10}}I$ attains $R$, with the convention $M=I$ when $\tr C_{10}=0$. When $C_{10}$ is rank deficient, the optimal unitary factor below is not unique, and we complete it from a full singular value decomposition. It requires the phase of one scalar, estimated from pilots or computed from a known angular spectrum. A channel can be nearly coherent, $|\rho|$ close to one, and still rotate the return beyond $90^\circ$, and the common phase removes this rotation. It cannot help when $R$ is small.

\emph{Unitary (Procrustes) correction.} $M_\star=UV^H$ attains $P$. With the unweighted predictor $A=U_A\Sigma_AV_A^H$ in place of $C_{10}$ one obtains its polar factor $M=U_AV_A^H$, which maximizes $\Rea\tr(A^HM)$ and is the variant estimated from pilots in our simulations. Both leave the stretch of the channel untouched and remove only its rotation. Truncated inversion performs comparably near a correlation zero~\cite{lessons}, and the unitary factor is used here because it attains the norm-bounded ceiling in closed form.

\emph{Port permutation.} The receiver transmits the seed component of port $i$ from port $\pi(i)$. The best permutation solves the assignment problem of Proposition~\ref{prop:ceil}(3) on the cross-covariance and needs the array geometry, the displacement and the angular statistics. The displacement follows from the terminal speed and the exchange timing, as in antenna selection on a movable platform~\cite{jsac}, and the angular statistics are treated as known over the training interval. Given these, training uses no pilots and no digital feedback beyond the analog return itself, and when they are unavailable the pilot-estimated unitary correction applies. A simpler relative is the partial shift used by predictor antennas~\cite{sternad2012,guo2021} and antenna selection on a movable platform~\cite{jsac}, which sends the seed of each port to the next port along the motion and leaves the boundary port silent. The assignment instead keeps every port and applies to any geometry, and its optimum need not be a shift. For three elements at $d=1.0$ in an isotropic field it reverses the port order, with main term $0.46$ against $0.22$ for the best cyclic shift.

\emph{Delayed return.} The receiver holds the return until a later displacement $d'$ and corrects it there~\cite{wcl}. In this paper $d'$ is the first later negative minimum of $\Rea\rho$ in the window $(d,d+3]$ wavelengths, located on a 3001-point grid and refined, and the return is sign-flipped. If the window holds no negative minimum, this repair is unavailable. For TDL the first negative minimum of $J_0(2\pi d)$ is at $d\approx0.610$, and since $\rho$ is real there this recovers $|\rho(d')|$. When $\rho$ is complex, waiting instead for the largest $|\rho|$ on the 3000 grid points of the same window and correcting the common phase there attains the largest common-phase main term on the grid, which is at least the sign-flip main term at every grid point. The window spans 308~ms at 3~km/h and 31~ms at 30~km/h at 3.5~GHz. Section~\ref{sec:3gpp} tests this variant separately, while Algorithm~\ref{alg:select} and its evaluation use the sign-flip delay. The delay costs latency rather than pilots.

\emph{Selection rule.} Algorithm~\ref{alg:select} combines the ceilings with thresholds $r_0=0.5$ and $g_0=0.1$, set before the evaluation of Section~\ref{sec:select}.

\begin{algorithm}[t]
\caption{Statistical repair selection}\label{alg:select}
\begin{algorithmic}[1]
\Require $C_{10}(d)$, $C_{00}$, the trace correlation $\rho(d')$ on the window $(d,d+3]$, thresholds $r_0=0.5$ and $g_0=0.1$
\State compute $R$, $P$, $G=P-R$ by~\eqref{eq:ceilings}
\If{$R\ge r_0$} \State common phase $M=e^{j\arg\tr C_{10}}I$
\ElsIf{$G\ge g_0$}
  \State spatial: solve the assignment of Prop.~\ref{prop:ceil}(3) for $S_\star$
  \If{only port switching is available and $S_\star\ge0.9P$} \State $M=\Pi_\star$
  \Else \State $M=M_\star$ (statistics or pilots)
  \EndIf
\Else \State delay the return to the first later negative minimum of $\Rea\rho$ in the window, and flip its sign (unavailable if none exists)
\EndIf
\end{algorithmic}
\end{algorithm}

\subsection{Two Scattering Clusters}\label{sec:twocluster}
Two scattering clusters make $G$ explicit. Take a uniform linear array of $N$ elements with half-wavelength spacing moving along its axis, and two equal-power point clusters at angles $\theta_1,\theta_2$ to the axis with normalized steering vectors $u,v$. Let $q=|u^Hv|$ and let $t=\pi d(\cos\theta_1-\cos\theta_2)$ be half the difference of the two Doppler phases. A rank-two singular value computation gives, with $G=0$ in the degenerate case $P=R=0$,
\begin{gather}
P=\sqrt{1-q^2\sin^2t},\qquad R=|\cos t|,\nonumber\\
G=\frac{(1-q^2)\sin^2t}{P+R}\quad(P+R>0),\label{eq:twocluster}
\end{gather}
Spatial correction helps when the clusters are both angularly resolvable ($q<1$) and dephased by motion ($t$ away from multiples of $\pi$). For small separations $G$ grows as the fourth power of the difference of the direction cosines, so it rises smoothly from zero without threshold behaviour. For $N=3$, $d=0.383$ and $\theta_1=35^\circ$, $G$ reaches $0.15$ at a separation of $35.8^\circ$ for point clusters and $36.1^\circ$ for clusters of uniform $\pm5^\circ$ width, where $R=0.82$. For the $\pm5^\circ$ clusters at a separation of $36^\circ$, $R=0.83$ at $d=0.383$, whereas $d=1.0$ drives $R$ to $0.03$ with $P=0.89$. Point clusters are a limiting case with singular $C_{00}$, and the ceilings remain defined because they need no inverse.

\section{Simulation Setup}\label{sec:setup}

\emph{Task.} This section describes the compact-task experiments of Sections~\ref{sec:3gpp}--\ref{sec:select}, and the CIFAR-10 setting is given in Section~\ref{sec:cifar}. The compact-task experiments jointly train a complex $2\times3$ encoder and a receiver on a binary geometry task. Three independent $\pm1$ symbols are mapped to the two transmit ports, and the receiver predicts the agreement of two of them from tanh-compressed port energies through a logistic head. The runs of Sections~\ref{sec:3gpp}, \ref{sec:perm} and~\ref{sec:select} use 3840 training examples in 120 Adam steps of batch 32 and 8192 test examples. The earlier isotropic and narrow-cluster runs of Section~\ref{sec:iso} use the same task and budgets, except that their CSI reference trains on 2666 examples in 84 updates to match the RF budget (Table~\ref{tab:budget}). The encoder norm and receiver weights are projected onto fixed bounds. Forward, reverse and pilot noise standard deviations are $0.1$, $0.03$ and $0.1$. Every training example draws an independent channel pair at a displacement $d$ that is fixed within a run.

\emph{Channels.} (i) \emph{Angular-field models} are zero-mean proper Gaussian fields over 128 or more plane-wave directions (isotropic, narrow von Mises clusters, or two finite-width clusters) evaluated at the forward and displaced array positions. (ii) \emph{3GPP CDL-A/B/C}~\cite{3gpp38901} use all clusters with 20 rays each, the official angles, delays, powers and angular spreads, random ray coupling fixed per seed, and single-polarized omnidirectional elements. The carrier is 3.5~GHz, the receive array lies along the $y$-axis and moves along the $x$-axis (broadside motion), eight tones at offsets $-3.5,-2.5,\dots,3.5$~MHz sample a 300~ns delay spread, and the forward--backward interval is 40~ms, i.e.\ $d=0.389$ at 3~km/h and $3.89$ at 30~km/h. Aging depends on the displacement in wavelengths, so at 28~GHz the same displacements are reached after about 5~ms at either speed. (iii) \emph{3GPP TDL-A/B/C} use the standard tap delays and powers with a common Jakes process per tap. TDL specifies no angles, so a fixed receive spatial correlation (a mixture of an isotropic component and a von Mises cluster) is attached, and the return is placed at the exact first Jakes zero $d=j_{0,1}/(2\pi)\approx0.3827$ with the same array orientation and carrier. The arrays are 3- or 4-element uniform linear arrays with half-wavelength spacing, a 7-element hexagonal array and a $3\times3$ square array, and the transmitter has two ports, a half-wavelength ULA along the $y$-axis in the CDL runs and independent columns in the TDL and angular-field runs.

\emph{Arms.} We compare no correction, the common phase (from statistics or from 64 forward/return pilot pairs), the unitary correction (the Procrustes solution $M_\star$ from statistics or the polar factor of a 64-pilot estimate of $A$), per-antenna phase (the phases of the diagonal of the estimated $A$), port permutation by linear assignment on the statistical $C_{10}$, the delayed return (to a later negative lobe of the real trace correlation, with a sign flip, or to the largest $|\rho|$ in the same window with a common phase correction), and CSI, the exact gradient $\mathcal{J}(H_f^H\delta)$, as the digital reference. Two return normalizations are used, and the reverse noise of standard deviation $0.03$ is added at the radio before any rescaling. A norm-field arm sends $M\delta/\|M\delta\|$ and restores the norm from a 16-bit field, so its effective gradient noise scales with $\|\delta\|$. A fixed-bound arm sends $M\delta/K$ with the public bound $K=8\sqrt{N_r}$ and multiplies the received vector by $K$, so its effective noise is $K$ times the radio noise. Since the receiver weights are clipped to $[-4,4]$, each seed coordinate of the tanh-energy receiver satisfies $|\delta_i|\le8$, so $\|M\delta\|\le K$ for every unitary $M$ and the bound is never exceeded. In the 3GPP runs of Table~\ref{tab:3gpp} and in the earlier runs of Section~\ref{sec:iso}, pilot-based and statistical arms use the norm field, while the permutation and the partial shift use the fixed bound, and the matched comparison of Section~\ref{sec:perm} adds a fixed-bound statistical unitary arm. A zero seed is sent as zero. In the oblique-motion and two-cluster runs all physical arms use the fixed bound. Pilots, correction bits and RF uses are counted per arm (Table~\ref{tab:budget}).

\emph{Statistics.} Results are averaged over 8 independent seeds, and methods share inputs, channels, noise and initialization within a seed. Intervals are paired seed bootstrap 95\% intervals. We report test accuracy and cross-entropy (CE). Comparisons against a CE margin of $0.02$ use mean differences and are descriptive.

\begin{table}[t]\centering
\caption{Digital overhead per training run (geometry task)}\label{tab:budget}
\footnotesize\setlength{\tabcolsep}{4pt}
\begin{tabular}{lcc}\toprule
Arm & RF uses & control bits\\\midrule
Common phase, 3-ULA (64 pilots) & 8000 & 73\,728\\
Unitary correction, 3-ULA (64 pilots) & 8000 & 74\,016\\
CSI, 3-ULA (2666 examples) & 7998 & 170\,624\\
Port permutation, 7/9 elements & 7680 & 0\\
Unitary correction, 7 / 9 elements & 8256 / 8384 & 30\,240 / 39\,456\\\bottomrule
\end{tabular}
\\[2pt]\parbox{0.95\columnwidth}{\scriptsize Ledgers count RF uses, pilots and control bits at the physical layer. On the 3-element array, pilot-based arms count 16 bits per example for the norm field and 32 bits per complex coefficient for the 64 pilot pairs on the $2\times3$ channel, plus 288 bits for a $3\times3$ correction. The CSI reference counts 3 RF uses and 64 control bits per training example. Port permutation uses a public fixed amplitude bound instead of a per-sample norm field and assumes known geometry, displacement and angular statistics, whose acquisition is not included in these counts.}
\end{table}

\section{Results}\label{sec:results}

\subsection{Separable and Non-Separable 3GPP Channels}\label{sec:3gpp}
Table~\ref{tab:3gpp} contrasts the two 3GPP families with a 4-element array, with all arms trained in one run on shared seeds. In the three TDL profiles at the first Jakes zero, $C_{10}=0$ to machine precision and $R=P=0$. No fixed correction can then supply a mean direction. The common phase and the statistical unitary correction coincide with the uncorrected return ($58$--$61\%$), and the pilot-estimated unitary correction and the permutation stay between $58\%$ and $62\%$. Delaying the return to the negative lobe at $d=0.61$ and flipping its sign reaches $98.4\%$ in all three profiles, within $0.002$ CE of CSI.

The CDL channels have non-separability residuals of $0.71$--$0.95$, and $P$ exceeds $R$ in every profile. The unitary correction computed from statistics reaches $98.4$--$99.1\%$, within $0.02$ CE of CSI in all six cells. Using instead the polar factor of $A$ estimated from 64 forward/return pilot pairs per training run changes the mean CE by at most $0.0006$ and the accuracy by at most $0.01$ points, so in these cells exact statistics are not required for a unitary repair. The common phase fails in CDL-A at 3~km/h ($68.6\%$, $0.52$ CE above the unitary correction) and is $0.017$--$0.027$ CE worse in three further cells. The tested delayed return, which flips the sign at the first later negative minimum of $\Rea\rho$, is worse than the unitary correction in all six cells, by up to $10.0$ points (CDL-A, 30~km/h). A separate run with 8 new seeds adds the stronger delay that waits for the largest $|\rho|$ in the same window and corrects the common phase there. It improves on the sign-flip delay by up to $0.243$ CE (CDL-A, 30~km/h) and in CDL-A at 3~km/h matches the unitary correction within $0.0005$ CE, at the cost of $30.6$~ms of extra delay. Even against the better of the two delays, the unitary correction without waiting has the lower CE in five of the six cells (Table~\ref{tab:t8}). The latency of waiting is not counted in these per-update comparisons. The TDL and CDL rows use different spatial marginals. The TDL failure nevertheless holds for any spatial marginal by Corollary~\ref{cor:sep}, and every CDL cell has $P>R$, which is the room the unitary correction uses.

\begin{table*}[t]\centering
\caption{3GPP channels, 4-element ULA, broadside motion: ceilings and test accuracy (\%), mean over 8 seeds. CDL: 40~ms interval. TDL: return at the first Jakes zero, 3.8~ms at 100~Hz Doppler}\label{tab:3gpp}
\footnotesize\setlength{\tabcolsep}{5pt}
\begin{tabular}{lcc|ccccccc}\toprule
Channel, km/h & $R$ & $P$ & None & Phase & Unitary & Unitary (64 pilots) & Permutation & Delay & CSI\\\midrule
TDL-A & 0.00 & 0.00 & 58.2 & 58.2 & 58.2 & 58.4 & 60.2 & 98.4 & 98.5\\
TDL-B & 0.00 & 0.00 & 61.2 & 61.2 & 61.2 & 58.8 & 62.0 & 98.4 & 98.5\\
TDL-C & 0.00 & 0.00 & 59.7 & 59.7 & 59.7 & 61.8 & 58.0 & 98.4 & 98.5\\
\midrule
CDL-A, 3 & 0.25 & 0.65 & 49.2 & 68.6 & 98.5 & 98.5 & 71.5 & 94.7 & 98.5\\
CDL-A, 30 & 0.16 & 0.28 & 48.9 & 97.8 & 98.4 & 98.4 & 98.2 & 88.5 & 98.5\\
CDL-B, 3 & 0.58 & 0.78 & 49.5 & 98.8 & 98.8 & 98.8 & 93.4 & 98.8 & 98.8\\
CDL-B, 30 & 0.10 & 0.22 & 49.1 & 98.1 & 98.8 & 98.8 & 75.3 & 98.0 & 98.8\\
CDL-C, 3 & 0.41 & 0.64 & 48.6 & 98.6 & 99.1 & 99.1 & 85.3 & 98.8 & 99.1\\
CDL-C, 30 & 0.19 & 0.23 & 49.1 & 98.6 & 98.7 & 98.7 & 64.2 & 97.9 & 99.1\\
\bottomrule
\end{tabular}
\\[2pt]\parbox{0.95\textwidth}{\scriptsize Phase and Unitary are computed from the exact statistics. TDL rows delay the return to the negative lobe at $d=0.61$ (6.1~ms) with a sign flip. CDL rows delay the return to the first later negative minimum of the real trace correlation, with a sign flip. Permutation uses a fixed public amplitude bound, and the other physical arms restore a 16-bit norm field.}
\end{table*}

\begin{table*}[t]\centering
\caption{Matched run with 8 new seeds, CDL cells (km/h): largest correlation magnitude $|\rho(d^\star)|$ in the window, extra delay (ms) of the sign-flip and phase-corrected delays, and mean test CE of each arm. Fixed: statistical unitary correction with the fixed bound $K$}\label{tab:t8}
\footnotesize\setlength{\tabcolsep}{4pt}
\begin{tabular}{lccc|cccccc}\toprule
Channel & $|\rho(d^\star)|$ & ms, flip & ms, phase & Delay, flip & Delay, phase & Unitary & Fixed & Permutation & CSI\\\midrule
CDL-A, 3 & 0.69 & 22.5 & 30.6 & 0.184 & 0.069 & 0.069 & 0.071 & 0.561 & 0.068\\
CDL-A, 30 & 0.16 & 10.5 & 0.7 & 0.322 & 0.079 & 0.071 & 0.075 & 0.082 & 0.068\\
CDL-B, 3 & 0.58 & 23.3 & 0.1 & 0.071 & 0.070 & 0.069 & 0.069 & 0.226 & 0.069\\
CDL-B, 30 & 0.16 & 8.6 & 6.3 & 0.076 & 0.076 & 0.073 & 0.076 & 0.422 & 0.069\\
CDL-C, 3 & 0.40 & 25.7 & 0.1 & 0.066 & 0.070 & 0.055 & 0.057 & 0.295 & 0.053\\
CDL-C, 30 & 0.19 & 9.0 & 1.4 & 0.083 & 0.077 & 0.070 & 0.076 & 0.555 & 0.053\\
\bottomrule
\end{tabular}
\end{table*}

\subsection{Correlation Zeros in an Isotropic Field}\label{sec:iso}
For a 3-element array in an isotropic field at the first zero of $J_0$, $d=0.383$, the trace correlation is $-0.0008$ while $P=0.598$. The displacement carries each element close to the former position of its neighbour, so the cross-covariance has small diagonal and large adjacent entries. With 64 pilots, unitary correction improves on common-phase correction by $0.276$ CE ($[0.253,0.305]$, 8/8 seeds) averaged over $d\in\{0.38,0.40,0.42\}$, and stays within the $0.02$ CE margin of delaying the return to the next extremum at $d=0.610$ (difference $-0.0013$, $[-0.0033,0.0009]$) without the extra delay. The partial shift by one port, which leaves the first port silent, reaches $96.8$--$96.9\%$ with no pilots, against $96.9\%$ for unitary correction and $97.2\%$ for CSI. Correcting each antenna's phase separately does the opposite. Averaged over the same displacements it is worse than the common phase by $0.270$ CE ($[0.218,0.321]$) and falls to $52.6$--$64.2\%$, because the diagonal of $C_{10}$ is what vanishes at the zero. The Procrustes solution weighted by $C_{00}$ stays within the same margin of the unweighted polar factor ($-0.0002$, $[-0.0006,0.0002]$).

Away from zeros the common phase suffices. For narrow clusters at $35^\circ$ the trace correlation keeps $|\rho|>0.9$ while its phase passes $90^\circ$. Over the four such cells uncorrected training reaches $51.7\%$, common-phase correction $92.5\%$ and CSI $93.0\%$ (CE gain $0.454$, $[0.433,0.479]$). The magnitude of the correlation therefore does not decide the direction of the return, whereas the main term does, and here the common phase attains it. Likewise at the isotropic reference displacements $d\in\{0.55,0.70\}$, away from the zero, neither spatial correction shows a stable gain over the common phase.

\subsection{Zero-Feedback Port Permutation}\label{sec:perm}
For arrays that are not linear, or motion not along the array axis, no shift maps the array onto itself. Table~\ref{tab:perm} uses a 7-element hexagonal array and a $3\times3$ square array at $d=0.383$ in an isotropic field, with motion at $15^\circ$, $37^\circ$ and $73^\circ$. The assignment-selected permutation reaches $99.54$--$99.68\%$, within $0.0018$ CE of the 64-pilot unitary correction in every cell, with zero pilots and zero digital feedback. A greedy rule that matches each port to the nearest displaced position, without the global assignment, works at $15^\circ$ but drops to $91.4\%$ and $89.3\%$ for the hexagonal array at $37^\circ$ and $73^\circ$. The assignment score is the main term of Proposition~\ref{prop:main}, and across all $7!$ and $9!$ permutations, each trained separately, its Spearman correlation with mean test accuracy is $0.95$--$0.97$. Fig.~\ref{fig:perm} shows two cells. Accuracy rises steeply once the score turns positive, and the greedy match of the hexagonal array at $37^\circ$ lands on the slope rather than on the plateau. In every cell the score-optimal permutation is within $0.06$ points of the best in-sample accuracy over all permutations. A separate run with 8 new seeds compares permutation and statistical unitary correction under the same prior and the same fixed bound. On the oblique arrays of Table~\ref{tab:perm} the unitary correction is never worse than the permutation by more than $0.0004$ mean CE and has the lower mean CE in five of the six cells. In the two-cluster cells of Section~\ref{sec:select} the permutation is never better by more than $0.0015$ CE and is worse by up to $0.27$ CE ($16.1$ points at $d=1.0$ without separation). These are cell means over seeds. With matched information the CE advantage of the permutation is therefore at most marginal (the largest observed accuracy advantage is $0.21$ points), and its appeal is implementation, a switching network with no complex weights. Permutation also depends on the geometry of the motion. In the CDL profiles of Table~\ref{tab:3gpp} the 4-element array moves broadside, so no element lands near the former position of another. The best permutation then captures only $7$--$67\%$ of the spatial ceiling $P$, and permutation accuracy ranges from $64.2\%$ to $98.2\%$, below the statistical unitary correction by $0.011$--$0.546$ CE. The gap is not an artifact of signalling. With the same fixed bound, the permutation still trails the unitary correction by $0.156$ to $0.491$ CE in the five cells with a low score ratio, and the fixed-bound and norm-field unitary corrections differ by at most $0.006$ CE. The ratio of the assignment score to $P$ is available before training and is at most $0.351$ in the five cells where the permutation loses more than $0.1$ CE. Likewise in the two-cluster model at $d=1.0$ the permutation falls to $75.1\%$ without cluster separation and to $87.3\%$ at $36^\circ$ (Fig.~\ref{fig:dial}). A low ratio shows only that pure permutations miss the ceiling, since a common phase combined with a permutation, or another sparse unitary matrix, may still reach it. Zero-feedback permutation therefore suits arrays whose displacement nearly maps the aperture onto itself, and the unitary correction covers the remaining cases.

\begin{table}[t]\centering
\caption{Oblique motion, isotropic field, $d=0.383$: test accuracy (\%)}\label{tab:perm}
\footnotesize\setlength{\tabcolsep}{3pt}
\begin{tabular}{lcccccc}\toprule
Array, direction & None & Greedy & Assignment & Unitary & CSI\\\midrule
Hex 7, $15^\circ$ & 59.6 & 99.39 & 99.54 & 99.59 & 99.56\\
Hex 7, $37^\circ$ & 62.1 & 91.38 & 99.55 & 99.57 & 99.56\\
Hex 7, $73^\circ$ & 65.6 & 89.25 & 99.56 & 99.56 & 99.56\\
Square 9, $15^\circ$ & 57.3 & 99.68 & 99.68 & 99.68 & 99.71\\
Square 9, $37^\circ$ & 67.0 & 99.61 & 99.64 & 99.65 & 99.71\\
Square 9, $73^\circ$ & 66.0 & 99.62 & 99.65 & 99.68 & 99.71\\\bottomrule
\end{tabular}
\\[2pt]\parbox{0.95\columnwidth}{\scriptsize Assignment and greedy use the exact geometry, displacement, direction and isotropic spectrum. Unitary: polar factor of a 64-pilot estimate. Unitary is pilot-estimated here, and the matched statistical comparison is reported in Section~\ref{sec:perm}. The assignment accuracies also appear in the report~\cite{lessons}.}
\end{table}

\begin{figure*}[t]\centering
\includegraphics[width=\textwidth]{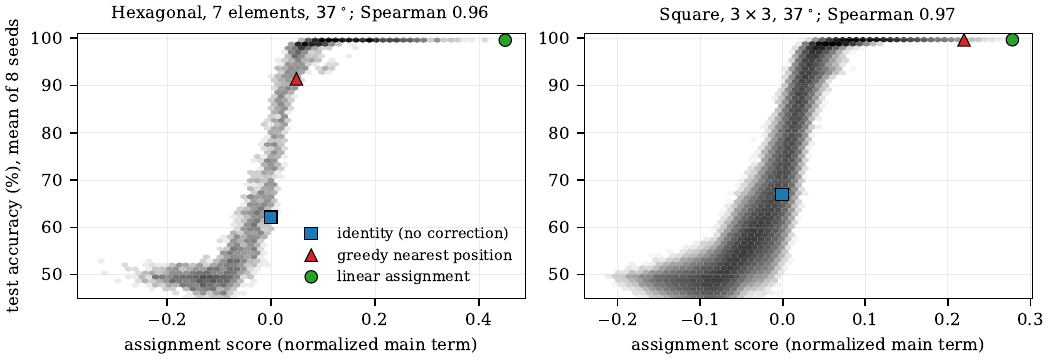}
\caption{Every receive-port permutation trained separately ($7!$ and $9!$ candidates, 8 seeds each), isotropic field, $d=0.383$, motion at $37^\circ$. The horizontal axis is the main term $m(\Pi)$ of Proposition~\ref{prop:main}, and shading counts permutations on a log scale. Linear assignment maximizes the score without pilots.}\label{fig:perm}
\end{figure*}

\subsection{Predictive Value of the Ceilings}\label{sec:select}
A dedicated run evaluates the ceilings and Algorithm~\ref{alg:select} over 23 configurations, the 14 two-cluster cells of Section~\ref{sec:twocluster} ($d\in\{0.383,1.0\}$, separations $0^\circ$--$60^\circ$, 3-element array, axial motion, all physical arms with a fixed amplitude bound) and the 9 cells of Table~\ref{tab:3gpp}. The training gain is the CE of statistical common-phase correction minus that of statistical unitary correction.

\begin{figure*}[t]\centering
\includegraphics[width=\textwidth]{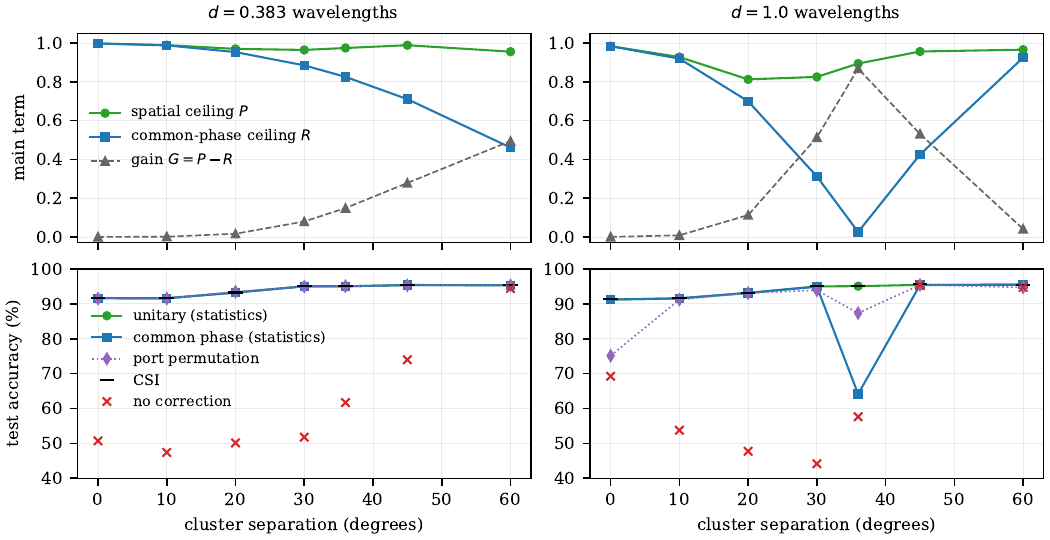}
\caption{Two equal-power clusters ($\pm5^\circ$ wide, the first at $35^\circ$) seen by a 3-element ULA moving along its axis. Top: ceilings of the main term. Bottom: test accuracy after training with corrections computed from the statistics, mean over 8 seeds. At $d=0.383$ the spatial gain $G$ passes $0.15$ near $36^\circ$ without any change in training. At $d=1.0$ and $36^\circ$ the common-phase ceiling nearly vanishes and only spatial correction trains.}\label{fig:dial}
\end{figure*}

\emph{A large common-phase main term suffices.} In the 10 cells with $R\ge0.6$ the gain is below $0.0015$ CE. At $d=0.383$ (Fig.~\ref{fig:dial}, left), $G$ rises from $0$ to $0.49$ and crosses $0.15$ near $36^\circ$, while in all seven cells the two corrections stay within $0.0015$ CE of each other and within $0.003$ CE of CSI. The two-cluster value $G=0.15$ of Section~\ref{sec:twocluster} marks no change in training.

\emph{Spatial correction is decisive only where the common phase nearly vanishes.} At $d=1.0$ and $36^\circ$, where $R=0.026$ and $P=0.89$, common-phase correction trains to $64.2\%$ and statistical unitary correction to $95.1\%$, a gain of $0.457$ CE ($[0.396,0.511]$). A low common-phase ceiling combined with a high spatial ceiling is not sufficient for a large gain. At $30^\circ$, with $R=0.31$ and $P=0.83$, the gain is $0.004$ CE, contrary to the gain of at least $0.10$ CE we had predicted for $R\le0.35$ and $P\ge0.8$. In CDL, CDL-A at 3~km/h fails with $R=0.25$ whereas CDL-B at 30~km/h trains with $R=0.10$. Between these cases the ceilings do not determine the outcome. For the angular-field cells the remainder is the task-dependent residual of Proposition~\ref{prop:main}, and for the CDL cells it also includes the part of the return that the left-multiplicative model of Section~\ref{sec:model} does not describe.

\emph{Ranking and selection.} Across the 23 cells the Spearman correlation between $G$ and the gain is $0.73$ (seed-resampling interval $[0.43,0.77]$), against $0.36$ for $-R$. Algorithm~\ref{alg:select} selected the repair with the lowest CE in 7 of the 8 cells where the candidate repairs differ by more than $0.02$ CE. In the remaining cell, CDL-C at 30~km/h ($R=0.19$, $G=0.04$), the rule delays the return (CE $0.102$) while the unitary correction ($0.073$) and the common phase ($0.077$) are better. The run also shows that the statistical unitary correction was never worse than the common phase by more than $0.00022$ CE in any of the 23 cells. When the full correction is affordable, a simpler rule identified after this run, which corrects spatially unless $G$ is below $0.01$ and then delays if $R$ is also small, would have chosen correctly in all 8 cells.

\subsection{CIFAR-10 Split Learning}\label{sec:cifar}
To test whether these conclusions survive a large model, a ResNet-18 encoder at the transmitter maps each CIFAR-10 image to 8 channel uses on 4 transmit antennas, and the 4-element terminal, moving broadside, combines the forward signal by MMSE with forward CSI and classifies it with a two-layer MLP. The backward pass is returned per sample with a restored norm, forward and reverse noise standard deviations are $0.1$ and $0.03$, and training uses SGD with momentum $0.9$, weight decay $5\times10^{-4}$, a one-cycle learning rate peaking at $0.05$, batch 128 and 40 epochs with random crops and flips. One channel pair at a single 3.5~GHz frequency serves each minibatch, and the channel field is redrawn every 100 updates. The three cells are the TDL channel at the first Jakes zero, CDL-A at 3~km/h and CDL-B at 30~km/h, each with 4 seeds after 2 separate check seeds. The delayed return changes only the displacement of the return channel and adds no wall-clock time in these runs. The ResNet encoder is not complex-linear and the channel field is reused over 100 updates, so Proposition~\ref{prop:main} does not apply exactly and these runs are empirical.

\begin{table*}[t]\centering
\caption{CIFAR-10 split learning with a ResNet-18 encoder, 4-element terminal in broadside motion: mean final test accuracy (\%) over 4 seeds, and in parentheses the mean epoch at which 80\% is first reached (-- if some seed does not reach it within 40 epochs)}\label{tab:cifar}
\footnotesize\setlength{\tabcolsep}{6pt}
\begin{tabular}{lcc|ccccc}\toprule
Channel & $R$ & $P$ & None & Phase & Unitary & Delay & CSI\\\midrule
TDL, first Jakes zero & 0.00 & 0.00 & 14.2 & 15.8 & 15.7 & 86.1 & 93.7\\
 & & & (--) & (--) & (--) & (29.5) & (7.0)\\
CDL-A, 3 km/h & 0.25 & 0.65 & 91.9 & 91.0 & 92.7 & 92.7 & 93.9\\
 & & & (17.0) & (18.0) & (12.3) & (11.5) & (6.8)\\
CDL-B, 30 km/h & 0.10 & 0.22 & 21.7 & 60.1 & 80.0 & 64.9 & 93.9\\
 & & & (--) & (--) & (--) & (--) & (7.0)\\
\bottomrule
\end{tabular}
\end{table*}

Table~\ref{tab:cifar} confirms that separable aging at a correlation zero leaves only the delayed return. The uncorrected, common-phase and unitary returns stay at $14$--$16\%$ accuracy, close to chance for ten classes, whereas the delayed return reaches $86.1\%$, $7.6$ points below CSI, at a lobe where the correlation magnitude is $0.40$. In both CDL cells the unitary correction is the best statistical repair. In CDL-B at 30~km/h, where both ceilings are small, it lifts accuracy from $60.1\%$ with the common phase to $80.0\%$ and beats the delayed return by $15.1$ points, but remains $13.8$ points below CSI. In CDL-A at 3~km/h the large model tolerates the aged return, all statistical repairs end within $3$ points of CSI, and the unitary correction reaches $80\%$ about 6 epochs earlier than the common phase.

The large and compact setups disagree on when the common phase suffices. It fails in CDL-A at 3~km/h for the compact task and nearly suffices there for CIFAR-10, and the reverse holds in CDL-B at 30~km/h. The two setups differ in task, transmit array, receiver, channel reuse and optimizer, so this reversal shows that the full system configuration decides when the common phase suffices, not the learning task alone. Two conclusions hold in both: at a separable correlation zero only a delayed return trains, and the unitary correction is never worse than the common phase by more than $0.00022$ CE in the compact task or by more than $0.1$ points in the CIFAR-10 cells. Algorithm~\ref{alg:select} selected the best of its candidate repairs in all three CIFAR-10 cells, and so did the simpler rule of Section~\ref{sec:select}, which was stated before these runs. These cells reuse channel profiles of the compact runs, so they test the rules on a new task rather than on new propagation conditions.

\section{Conclusion}
A fixed array that returns gradients through an aging channel can correct the common phase, correct spatially, or wait, and the trace and nuclear norm of the forward--backward cross-covariance tell these repairs apart. When the two ceilings coincide, as in the 3GPP channels TDL-A/B/C where every antenna ages by the same scalar, no correction has a mean to recover at a correlation zero and the array must wait. When they differ, as in 3GPP CDL channels, a unitary correction computed from statistics stays within 0.02 CE of ideal channel knowledge without delay on the compact task, and on CIFAR-10 it has the highest mean final accuracy among the statistical repairs while remaining 13.8 points below CSI in CDL-B at 30~km/h. On the compact task, a port permutation found by linear assignment matches the pilot-estimated unitary correction with zero feedback when the displacement nearly maps the aperture onto itself. Spatial correction is decisive where the common-phase main term nearly vanishes, and on the compact task the gap between the ceilings ranks its training gain elsewhere. Which repair suffices between these cases depends on the system configuration, as the compact and CIFAR-10 experiments show, whereas across the tested cell means the unitary correction was no more than 0.00022 CE worse than the common phase on the compact task and no more than 0.1 points worse on CIFAR-10. Estimating the statistics online, with the cost of that estimation, is the next step.

\bibliographystyle{IEEEtran}
\bibliography{refs}

\end{document}